\theoremstyle{thmstyleone}%
\newtheorem{theorem}{Theorem}
\newtheorem{proposition}[theorem]{Proposition}%
\theoremstyle{thmstyletwo}%
\newtheorem{remark}{Remark}%
\theoremstyle{thmstylethree}%
\newtheorem{definition}{Definition}%
\newtheorem{problem}{Problem}%
\newcommand{\Tr}{\text{Tr}}
\newcommand{\tr}{{\rm Tr}}
\begin{document}

\title[Optimizing Measurements Sequences for Quantum State Verification]{Optimizing Measurements Sequences for Quantum State Verification}


\author*[1]{\fnm{Weichao} \sur{Liang}}\email{weichao.liang@yahoo.com}

\author[1]{\fnm{Francesco} \sur{Ticozzi}}\email{ticozzi@dei.unipd.it}

\author[1]{\fnm{Giuseppe} \sur{Vallone}}\email{vallone@dei.unipd.it}

\affil*[1]{\orgdiv{Department of Information Engineering}, \orgname{University of Padova}, \orgaddress{ \city{Padova}, \postcode{35131}, \country{Italy}}}




\abstract{We consider the problem of deciding whether a given state preparation, i.e., a source of quantum states, is accurate, namely produces states close to a target one within a prescribed threshold. We show that, when multiple measurements need to be used, the order of measurements is critical for quickly assessing accuracy. We propose and compare different strategies to compute optimal or suboptimal measurement sequences either relying solely on {\it a priori} information, i.e., the  target state for state preparation, or actively adapting the sequence to the previously obtained measurements. Numerical simulations show that the proposed algorithms reduce significantly the number of measurements needed for verification, and indicate an advantage for the adaptive protocol especially assessing faulty preparations.}

\keywords{quantum state verification, optimal measurement sequence, off-line and adaptive strategies, optimization}



\maketitle

\section{Introduction}
Due to the unavoidable errors, noise or decoherence, realistic quantum devices do not always behave as expected. Various  metrics can be used to characterize and benchmark a quantum device \cite{eisert2020quantum}.
In this work, we focus on devices expected to reliably produce some target state.
Given an unknown quantum state in a $d$-dimensional Hilbert space $\mathcal{H}_d$, $d^2-1$ measurements are necessary in general for a full tomographic reconstruction of the corresponding density matrix~\cite{paris2004quantum}. 

However, in many situations~, such as quantum telecommunication, quantum state preparation, quantum computation etc., we are more concerned with whether some experimentally-accessible quantum state
$\rho_{\rm exp}$ is accurate enough with respect to a target state $\rho_0$, representing the intended result of the preparation, processing or communication task, rather than fully reconstructing it. This problem is referred to as {\it quantum state certification} \cite{certificationtutorial}.

Of course, one way to tackle the problem would be to proceed with a full tomography of $\rho$, and then decide accuracy consequently. This is in general however not efficient as it requires to obtain averages for at least $d^2-1$ independent observables, and it does not leverage on prior information about the target state $\rho_0$. For example, if the target state is known to be pure, a smaller number of measurements are required via compressed sensing techniques \cite{quantumcompressed}.

If the measurements can be designed in an optimized way on the state to be certified, more efficient techniques can be devised, that require less measurements and less repetitions to obtain reliable certification with a specified probability \cite{pallister2018optimal,certificationtutorial}. The basic intuition is that if the state is pure, the state itself is an optimal measurement and repeating the measurement will lead to a quadratic advantage in the number of tests to be performed to achieve a desired accuracy. The strategy can then be extended to include locality constraints, specific classes of target states, adversarial choices in the states to be tested, classical communication and more \cite{pallister2018optimal,wang2019optimal,dangniam2020optimal,yu2019optimal,Jiang2020,li2020optimal,liu2019efficient,liu2020efficient}. A common assumption of these algorithms, to avoid false negatives and ensuring the quadratic advantage, is that all measurements leave the target state invariant.

In this work we reconsider this task, which we shall call the {\it verification problem}, in a different scenario: we assume that only a {\it finite} set of measurement is available and given. Under this assumption, the previously-recalled optimal verification strategies are typically not effective, as it is possible that no measurements leaves the target invariant. 

We thus  construct procedures that decide whether the state $\rho$ is accurate within a prescribed tolerance, without necessarily obtaining a full tomography and thus still reducing the number of required observables. The central idea is to order the measurement sequence using the {\it a priori} information, so that the first measurements are the most informative when the state to be measured is indeed $\rho_0.$ 
The procedure can also be seen as a way to optimize the order of the measured observables in a tomography depending on the best available estimate of the state at hand, in the spirit of \cite{zorzi2013minimum}.  

The procedure we propose are of two types: the first ones compute the whole measurement sequence off-line, and then uses it choose which measurements to actually perform, stopping as soon as verification can be decided. A crucial aspect, in practice, is in fact the computation of the optimal sequence. The latter is a nontrivial optimization problem that has to be solved in a number of instances that scales combinatorially with the number of measurements, which in turn grows at least quadratically in the dimension of the space. For this reason, even off-line calculation becomes of optimal sequences becomes rapidly impractical. In order to address this problem, we propose iterative algorithms, which determine the best next measurements given the previously chosen ones. Two versions are provided, where the second one relies on a relaxation of the constraints that allows for an analytic treatment.
These ways of constructing the sequence, albeit suboptimal, are computationally treatable and offer another advantage: they lend themselves to be used as adaptive strategies, which rely on the previously obtained actual measurements rather than just the target state. In fact,  the second type of verification method we propose is an {\it adaptive} strategy, where the the next measurement is chosen based on the best available estimate given the actual measurement performed to that point. 
The different methods are tested with a paradigmatic example: a two-qubit state where only local Pauli measurements are available.  The results highlight the flexibility of the adaptive method, which performs well even in the case of inaccurate priors.

\section{Problem setting and verification criteria}
We denote by $\mathcal{B}(\mathcal{H})$ the set of all linear operators on a finite dimensional Hilbert space $\mathcal{H}$. 
Define $\mathcal{B}_{*}(\mathcal{H}):=\{X\in\mathcal{B}(\mathcal{H})|X=X^\dag\}$ and $\mathcal{B}_{>0}(\mathcal{H}):=\{X\in\mathcal{B}(\mathcal{H})|X> 0\}$.  $\mathrm{Tr}(A)$ indicates the trace of $A\in\mathcal{B}(\mathcal{H})$.  
We define $\mathcal{S}(\mathcal{H}_d):=\{\rho\in\mathcal{B_*}(\mathcal{H}_d)|\,\rho\geq 0, \mathrm{Tr}(\rho)=1\}$ as  the set of all physical density matrices on $\mathcal{H}_d$. 

In order to precisely specify the verification task, we introduce the following definition, which depends on the choice of a relevant distance-like function.

\begin{definition}[$(\epsilon,D,\rho_0)$-accurate]
Given a target state $\rho_0\in\mathcal{S}(\mathcal{H}_d)$ and a (pseudo-)distance function $D$ on $\mathcal{S}(\mathcal{H}_d)$, the density matrix $\rho\in\mathcal{S}(\mathcal{H}_d)$ is called $(\epsilon,D,\rho_0)$-accurate if $D(\rho,\rho_0)\leq \epsilon$ with $\epsilon\geq 0$.
\end{definition}

Consider a set of observables, represented by Hermitian matrices $\{A_i\}_{i=1}^R$ where $R$ is a positive integer. This set of observables is called information-complete if $\{A_i\}_{i=1}^R$ generate the set of all $d$-dimensional traceless Hermitian matrices. Note that a necessary condition for the observables to be information-complete is $R\geq d^2$.
If $\{A_i\}_{i=1}^R$ is information-complete and the measurement statistics $\{\hat{y}_i\}_{i=1}^R$ are known exactly, i.e., $\hat{y}_i=y_i:=\mathrm{Tr}(\rho_{\rm exp}A_i)$ with $i\in\{1,\dots,R\}$, then there is a unique state compatible with the constraints, that is the generated state $\rho_{\rm exp}$. 
Throught this paper, we suppose that set of observables is {\it finite, information-complete, and fixed.}
The problem we will be concerned with is the following,
\begin{problem}
Based on the \textnormal{a priori} state $\rho_0$ and available data $\{\hat{y_i}\}^{K}_{i=1}$ with $K\leq R$, determine the optimal order of $A_k$ to verify if the generated state $\rho_{\rm exp}$ is $(\epsilon,D,\rho_0)$-accurate via as few measurements as possible.
\end{problem}

In order to introduce the central idea of the work, let us assume for now that a certain sequence of the available observables has been decided.
There are two cases in which the verification process can be terminated, establishing whether the generate state is $(\epsilon,D,\rho_0)$-accurate or not with a minimum of measurements.
Suppose that the measurements are perfect, namely the available data $y_i$ satisfy $y_i=\Tr(A_i \rho_{\rm exp})$. Denote by $\bar{\mathbf{S}}_i:=\{\rho\in\mathcal{S}(\mathcal{H}_d)|\,\mathrm{Tr}(\rho A_i)=y_i\}$ the set of states compatible with the measurement data $y_i$. Based on  $\{y_i\}^{K}_{i=1}$, two criteria can be used to verify if the generated state $\rho_{\rm exp}$ is $(\epsilon,D,\rho_0)$-accurate in each step:\\
\textbf{C1}. If $\min_{\rho\in\bigcap^K_{i=1}\bar{\mathbf{S}}_i} D(\rho,\rho_0)>\epsilon$, $\rho_{\rm exp}$ is not $(\epsilon,D,\rho_0)$-accurate;\\
\textbf{C2}. If $\max_{\rho\in \bigcap^K_{i=1} \bar{\mathbf{S}}_i} D(\rho,\rho_0)\leq \epsilon$, $\rho_{\rm exp}$ is $(\epsilon,D,\rho_0)$-accurate.\\

Depictions of the situations corresponding to the above two  criteria \textbf{C1} and \textbf{C2} are shown in Figure~\ref{Fig:QSV_criteria}. \textbf{C1} guarantees that all states compatible with the measurement data are outside of the ball of radius $\epsilon$ around the target state $\rho_0$, while \textbf{C2} ensures that the same states are all inside.
\begin{figure}[!t]
    \centering
    \subfloat[{\centering {\bf C1}: $\rho_{\rm exp}$ is determined not to be $(\epsilon,D,\rho_0)$-accurate}]{{\includegraphics[height=4cm]{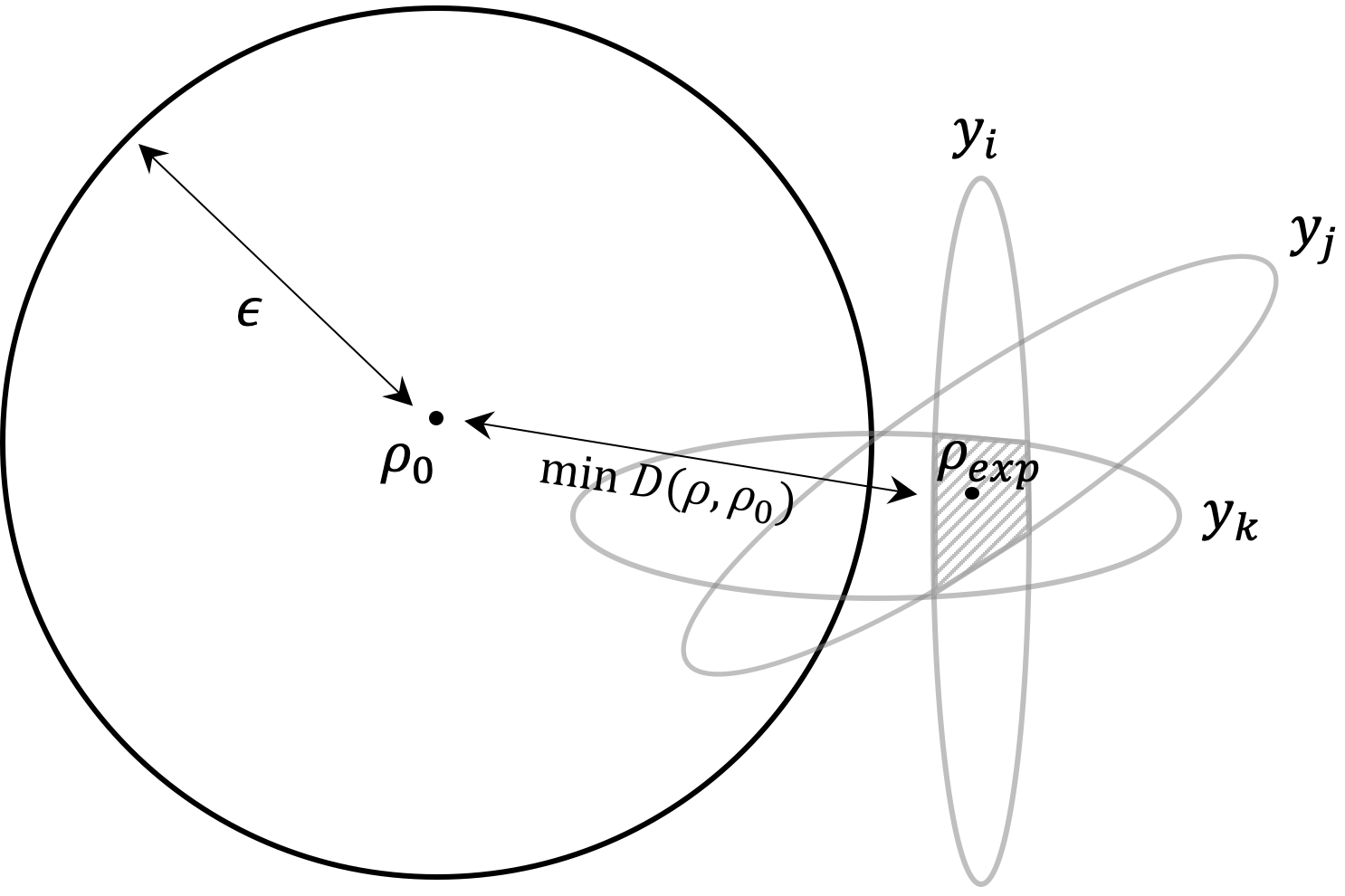} }}%
    \qquad
    \subfloat[{\centering {\bf C2}: $\rho_{\rm exp}$ is determined to be $(\epsilon,D,\rho_0)$-accurate}]{{\includegraphics[height=4cm]{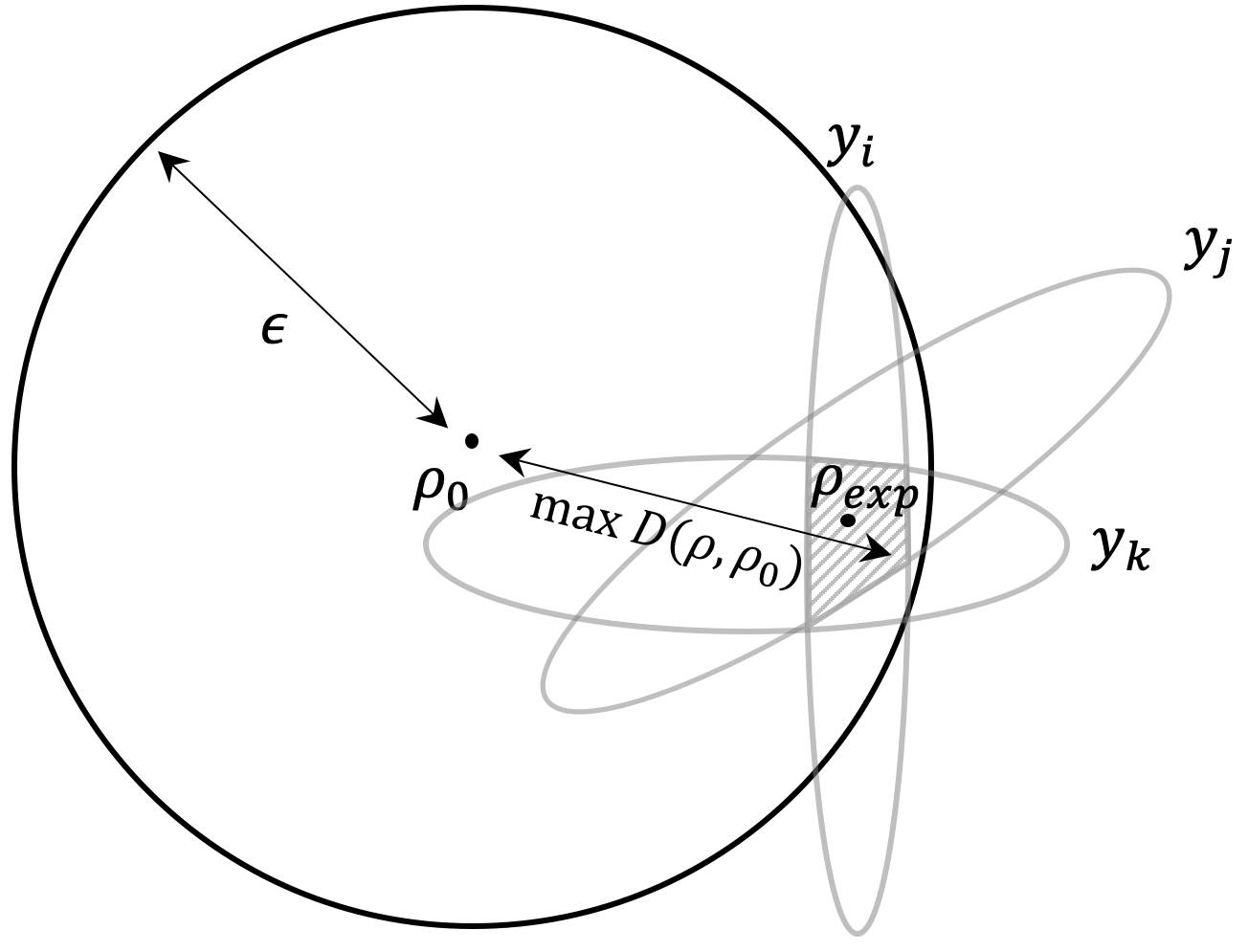} }}%
    \caption{Diagrams corresponding to the quantum state verification criteria \textbf{C1} and \textbf{C2}. The grey area represents $\bar{\mathbf{S}}_i\cap\bar{\mathbf{S}}_j\cap\bar{\mathbf{S}}_k$, i.e., the states compatible with the measurement data $y_i$, $y_j$ and $y_k$. }%
    \label{Fig:QSV_criteria}
\end{figure}

In the following sections, we shall leverage the criteria above in order to devise optimal measurement sequences, or suboptimal ones that present computational advantages and can be adapted to the actual measurement outcomes.

\section{Verification of quantum state based on the \textit{a priori} state}\label{sec3}

In this section, we first introduce a strategy of determining the measurement sequence $\mathsf{M}$ off-line based only on the \textit{a priori} target state $\rho_0$, i.e., without using the measurement data. We next use the sequence $\mathsf{M}$ to verify that the generated states $\rho_{\rm exp}$ is or is not $(\epsilon,D,\rho_0)$ accurate according to the criteria \textbf{C1} and \textbf{C2}. The objective is to perform as few measurements as possible to achieve verification.

\subsection{Off-line construction of the optimal measurement sequence}
From an experimental point of view, it is arguably easier to determine the whole sequence of measurements before performing them. We shall start by exploring this approach, while the adaptive approach, in which the next measurement is chosen depending on the outcome of the previous ones, will be treated in Section \ref{sec:adaptive}. 

Denote by $\mathbf{S}_i(\rho_0):=\{\rho\in\mathcal{S}(\mathcal{H}_d)|\,\mathrm{Tr}(\rho A_i)=\mathrm{Tr}(\rho_0 A_i)\},$ the set of density matrices compatible with the measurement $A_i$ that we would have if the state was actually $\rho_0\in\mathcal{S}(\mathcal{H}_d)$. 
While relying only on prior information, with no true measurements data available, we use $\mathbf{S}_i(\rho_0)$
to replace the constraints $\bar{\mathbf{S}}_i$ 
in the criteria \textbf{C1} and \textbf{C2}. Note that $\mathbf{S}_i(\rho_0)=\bar{\mathbf{S}}_i$ if the state is perfect generated, i.e., $\mathrm{Tr}(\rho_0A_i)=y_i$.

Obviously, since $\rho_0\in \mathbf{S}_i(\rho_0)$ for all $i\in\{1,\dots,R\}$ by construction, then in this scenario \textbf{C1} can never be satisfied. Thus, we only exploit \textbf{C2} to determine the order of measurements. Suppose that the distance function $D$ is continuous on $\mathcal{S}(\mathcal{H}_d)$, e.g., any matrix norm, quantum relative entropy, etc., (see~\cite[Chapter 9, 11]{nielsen2002quantum} for standard options), due to the compactness of $\bigcap_i\mathbf{S}_i(\rho_0)$, $\max_{\rho\in\bigcap_i\mathbf{S}_i(\rho_0)} D(\rho,\rho_0)$ exists. 

If the state was actually $\rho_0$, the minimal amount of measurements that allow to determine that the preparation was indeed accurate would correspond, according to {\bf C2}, to the minimum $n$ for which there exists a set of measurements indexes $\mathsf{M}_n\subset\{1,\dots,R\}$ such that \begin{equation*}
    \max_{\rho\in \bigcap_{i\in \mathsf{M}_n}\mathbf{S}_i(\rho_0)}D(\rho,\rho_0) \leq \epsilon,\end{equation*} 
 and the optimal sequence would be any permutation of the $\mathsf{M}_n$. 

The Algorithm OS could be used to generate one such optimal sequence.
\begin{table}[h!]
\begin{tabular}{l}
\hline\\
{\normalsize \textbf{Algorithm OS}: Optimal verification Sequence based on $\rho_0$ } 
\\ 
\hline
\vbox{
\begin{itemize}
    \item \textbf{Initialization:}  Define the set $\mathsf{S}=\{1,\dots,R\}$ and set $k=1$.
    \item \textbf{Step 1:} Denote a $k$ elements sequence by $\mathsf{M}_k:=(m_{k,1},\dots,m_{k,k})\in \mathsf{S}^k$. Compute 
    $$
    \overline{\mathsf{M}}_k\in \operatorname*{arg\,min}_{\mathsf{M}_k\in \mathsf{S}^k } \left( \max_{\rho\in \bigcap_{i\in \mathsf{M}_k}\mathbf{S}_i(\rho_0)}D(\rho,\rho_0)\right).
    $$
    If $\max_{\rho\in \bigcap_{i\in \overline{\mathsf{M}}_k}\mathbf{S}_i(\rho_0)}D(\rho,\rho_0)\leq \epsilon$ \textbf{stop the procedure}. Otherwise, 
    update $k=k+1$.
    \item \textbf{Step 2:} Repeat \textbf{Step 1} until $k=d^2$.
\end{itemize}
}\\
\hline
\end{tabular}
\end{table}

Note that each step of above algorithm is independent, thus for some $i<j$, $\overline{\mathsf{M}}_i\not\subset\overline{\mathsf{M}}_j$. At the end of process, we obtain a sequence of measurements $\overline{\mathsf{M}}_n$ containing $n\leq R$ elements, whose corresponding observables are the optimal choice for the verification of $(\epsilon,D,\rho_0)$ accuracy of $\rho_{\rm exp}=\rho_0$. The order of the elements belonging to $\overline{\mathsf{M}}_n$ is not important. 
However, the computational complexity of the above algorithm is too large, in order to determine $\overline{\mathsf{M}}_n$, it needs to solve $\sum^n_{k=1}\left(\begin{smallmatrix} n\\k  \end{smallmatrix}\right)=2^n-1$ optimization problems. 
Moreover, in practice, the generated state $\rho_{\rm exp}$ is usually different from the target state $\rho_0$, thus the generated measurement sequence $\overline{\mathsf{M}}_n$ by Algorithm OS with respect to $\epsilon$ may not be able to verify the accuracy of $\rho_{\rm exp}.$ To obtain a tomographically-complete sequence one needs to add $d^2-n$ linearly-independent measurements operators from the available set.

\subsection{Iterative construction of verification sequences} 

In order to address the above issues, we propose to construct the sequence of measurements iteratively, based on the previous determined measurement indexes, which can greatly reduce the computational complexity and allow to extend the procedure to the full observable set. The resulting sequence will be in general suboptimal with respect to $\rho_0$, but still yields an 
advantage with respect to a random sequence of observables, as shown in Section \ref{sec:simulations}.

\subsubsection{Optimization-based approach}
The general algorithm we propose works as follows:
it starts by evaluating, for each measurement $A_i$, the maximal distance $\alpha_i$ with respect to $\rho_0$ of the states $\rho$ belonging to ${\bf S}_i(\rho_0)$, the set of states that are compatible with the measurement outcome ${\rm Tr}(\rho A_i)={\rm Tr}(\rho_0 A_i)$. The measurement giving the minimum value of $\alpha_i$ is selected as first measurement $A_{m_1}$, and the corresponding maximum distance is $\alpha^1_{m_1}$.
Now, the next measurement $A_{m_{i+1}}$ is chosen so that it is linearly independent on the previously chosen ones and at the same time minimizes the maximum distance of the new compatible set with the measurement of {\it all the previously selected} $A_{m_1},\ldots ,A_{m_i}$. 
The minimum worst-case distance among compatible states $\alpha^n_i,$ with $n$ indicating the iteration and $i$ the selected measurement, is chosen as an indicator of how likely it is that checking \textbf{C2} will allow us to determine whether the actual state is $(\epsilon,D,\rho_0)$-accurate. 

A more formal form of the above algorithm is summarized as Algorithm IOS.
\begin{table}[h!]
\begin{tabular}{l}
\hline\\
{\normalsize \textbf{Algorithm IOS}: Iteratively Optimized Sequence based on $\rho_0$} 
\\ 
\hline
\vbox{
\begin{itemize}
    \item \textbf{Initialization:}  Define the sets $\mathsf{M}=\emptyset$ and $\mathsf{S}=\{1,\dots,R\}.$  Set $k=0$.
    \item \textbf{Step 1:} Define $\bar{\mathsf{S}}$ as the set of all $i\in\mathsf{S}$ such that $A_i\notin \mathrm{span}\{A_n\}_{n\in\mathsf{M}}$. Set $k=k+1$. For all $i\in \bar{\mathsf{S}}$, compute 
$$
\alpha^k_i:=\max_{\rho\in  \bigcap_{n\in\mathsf{M}}\mathbf{S}_n(\rho_0)\cap \mathbf{S}_i(\rho_0) } D(\rho,\rho_0).
$$
If $\min_{i\in \bar{\mathsf{S}}}\alpha^k_i=0$, set $\mathsf{M}=\mathsf{M}\cup \bar{\mathsf{S}}$ and \textbf{stop the process}: in this case $\rho_0$ must belong to the span of the selected measurements.

Otherwise, compute $\operatorname*{arg\,min}_{i\in \bar{\mathsf{S}}}\alpha^k_i$.
If $\arg\min$ output a single integer, set $m_k=\operatorname*{arg\,min}_{i\in \bar{\mathsf{S}}}\alpha^k_i$.
If $\arg\min$ output multiple integers, designate a unique $m_k$ in that set, according to some deterministic rule or at random: in this case the criteria we consider do not lead to a preferred choice.

Update $\mathsf{M}=\mathsf{M}\cup \{m_k\}$, $\mathsf{S}=\mathsf{S}\setminus \{m_k\}$.
\item \textbf{Step 2:} Repeat \textbf{Step 1} until $\mathrm{card}(\mathsf{M})=d^2$
\end{itemize}
}\\
\hline
\end{tabular}
\end{table}



At the end of the procedure, $\mathsf{M}$ is a ordered sequence of measurements, from the most to the less informative based on \textit{a priori} state. 
Note that, at the end of Step 2, we obtain a sequence of measurements containing ${n}$ linearly independent observables, from which the target state $\rho_0$ can be reconstructed via tomography. By construction $\alpha^{1}_{m_1}\geq\alpha^{2}_{m_2}\geq\cdots\geq\alpha^{n}_{m_{n}}$ is a decreasing sequence of the maximum distance from $\rho_0$ of the states compatible with the measurement. However, in practice $\rho_{\rm exp}\neq \rho_0$, for the case of $n<d^2$, the $n$ observables may not be sufficient to verify the accuracy of $\rho_{\rm exp}$. Thus, we need to complete the sequence with additional $d^2-n$ linearly independent observables, which we can choose at random or according to other criteria.

\subsubsection{Analytic approach based on distance bound}
The computational complexity of Algorithm IOS is still highly dependent on the number of optimization problems to be solved that, albeit reduced with respect to the optimal \textit{a priori} sequence, still increases quadratically with the dimension of the Hilbert space. To address this issue, we provide an approximation of Algorithm IOS when the Hilbert-Schmidt distance is chosen as the distance function. In this case, we are not ordering the measurements by evaluating the exact maximal distance of the set of states compatible with the measurement (i.e., the $\alpha^k_i$ values), but instead by evaluating an upper bound of such distance that can be expressed analytically.

The Hilbert-Schmidt distance is defined as
$$
d_{HS}(\rho_0,\sigma):=\sqrt{\mathrm{Tr}(\rho_0-\sigma)^2}, \quad \forall\,\rho_0,\sigma \in \mathcal{S}(\mathcal{H}_d),
$$
In the following proposition, we provide an upper bound of the distance on the target state $\rho_0$ for states $\sigma$ that are compatible with $\rho_0$ according to a set of observables $\{A_{i}\}^K_{i=1}$ where $K\leq R$.

\begin{proposition}
Given a state $\rho_0\in\mathcal{S}(\mathcal{H}_d)$ and 
a set of observables $A_{i}\in\mathcal{B}_{*}(\mathcal{H}_d)$ 
with $i\in\{1,\dots,K\}$, for any $\sigma\in\bigcap^K_{i=1}\mathbf{S}_{i}(\rho_0)$, then
\begin{equation}
    d_{HS}(\rho_0,\sigma) \leq \sqrt{1-{\rm Tr}(\varrho_K^2)}+
    \sqrt{{\rm Tr}(\rho_0^2)-{\rm Tr}(\varrho_K^2)}
    \label{better_constraint}
\end{equation}
where $\varrho_K$ is the projection of $\rho_0$ in the subspace spanned by the operators $\{A_i\}^K_{i=1}$.
\label{Prop:offlineApp}
\end{proposition}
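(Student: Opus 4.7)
The plan is to work in the real Hilbert space $\mathcal{B}_*(\mathcal{H}_d)$ endowed with the Hilbert--Schmidt inner product $\langle X,Y\rangle = \mathrm{Tr}(XY)$, so that $d_{HS}$ is the induced norm distance. First I would observe that the defining constraint $\sigma\in\mathbf{S}_i(\rho_0)$, namely $\mathrm{Tr}(\sigma A_i)=\mathrm{Tr}(\rho_0 A_i)$, is exactly the statement that $\sigma-\rho_0$ is orthogonal to $A_i$ in this inner product. Hence $\sigma-\rho_0$ lies in the orthogonal complement $V^\perp$ of $V:=\mathrm{span}\{A_i\}_{i=1}^K$.

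Next, let $P_V$ denote the orthogonal (Hilbert--Schmidt) projector onto $V$, so that by definition $\varrho_K=P_V(\rho_0)$, and note that $P_V(\sigma)=P_V(\rho_0)=\varrho_K$ since $P_V(\sigma-\rho_0)=0$. Decomposing orthogonally,
\begin{equation*}
\sigma-\rho_0 \;=\; P_V(\sigma-\rho_0)+(I-P_V)(\sigma-\rho_0)\;=\;(I-P_V)\sigma-(I-P_V)\rho_0,
\end{equation*}
so the triangle inequality gives
\begin{equation*}
d_{HS}(\rho_0,\sigma)\;\leq\;\bigl\|(I-P_V)\sigma\bigr\|_{HS}+\bigl\|(I-P_V)\rho_0\bigr\|_{HS}.
\end{equation*}

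It then remains to bound the two summands separately. For the second, Pythagoras gives $\|(I-P_V)\rho_0\|_{HS}^2=\|\rho_0\|_{HS}^2-\|\varrho_K\|_{HS}^2=\mathrm{Tr}(\rho_0^2)-\mathrm{Tr}(\varrho_K^2)$, which is exactly the second square root in \eqref{better_constraint}. For the first, I would use the fact that $\sigma$ is a density matrix and hence $\mathrm{Tr}(\sigma^2)\leq 1$, so by Pythagoras again $\|(I-P_V)\sigma\|_{HS}^2=\|\sigma\|_{HS}^2-\|P_V(\sigma)\|_{HS}^2\leq 1-\mathrm{Tr}(\varrho_K^2)$, since $P_V(\sigma)=\varrho_K$. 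Combining the two bounds yields \eqref{better_constraint}.

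The argument is essentially linear-algebraic with one physical input. The only nontrivial step is the bound $\|(I-P_V)\sigma\|_{HS}^2\leq 1-\mathrm{Tr}(\varrho_K^2)$; its sharpness rests on the purity bound $\mathrm{Tr}(\sigma^2)\leq 1$ for states, which is the only place where the constraint $\sigma\in\mathcal{S}(\mathcal{H}_d)$ (beyond Hermiticity and unit trace) enters. I expect no further obstacles.
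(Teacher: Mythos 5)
Your proof is correct and is essentially the paper's argument: both hinge on the fact that $\sigma$ and $\rho_0$ share the same projection $\varrho_K$ onto $\mathrm{span}\{A_i\}_{i=1}^K$, on Pythagoras in the Hilbert--Schmidt inner product, and on the purity bound ${\rm Tr}(\sigma^2)\leq 1$. The only difference is packaging: you apply the triangle inequality to $(I-P_V)\sigma-(I-P_V)\rho_0$, whereas the paper expands $d_{HS}^2$ and bounds the cross term ${\rm Tr}(\varrho_K^\perp\varsigma_K^\perp)$ via Cauchy--Schwarz --- your version is precisely the geometric reading ($d_{HS}\leq R_K+d_K$) that the paper itself states immediately after the proposition.
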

\begin{proof}
The square of the Hilbert-Schmidt distance can be written as $d^2_{HS}(\sigma,\rho_0)={\rm Tr}(\rho_0^2)+{\rm Tr}(\sigma^2)
-2{\rm Tr}(\sigma\rho_0)
\leq 1+{\rm Tr}(\rho_0^2)
-2{\rm Tr}(\sigma\rho_0)$.
Any state $\sigma\in\bigcap_i\mathbf{S}_i(\rho_0)$ satisfies ${\rm Tr}(\sigma A_{i})={\rm Tr}(\rho_0 A_{i})$ for all $i\in\{1,\dots,K\}$.
Therefore, the orthogonal projection of $\rho_0$ and $\sigma$ on the space 
spanned by the operators $\{A_i\}^K_{i=1}$ is the same: we can defined it as $\varrho_K$.
We can thus write $\rho_0=\varrho_K+\varrho_K^\perp$ and $\sigma=\varrho_K+\varsigma_K^\perp$ with $\varrho_K^\perp$ and $\varsigma_K^\perp$ orthogonal to $\varrho_K$ according to the Hilbert-Schmidt inner product, i.e., $\langle \rho,\sigma\rangle_{HS}={\rm Tr}(\rho^*\sigma)$.
Therefore ${\rm Tr}(\rho_0\sigma)={\rm Tr}(\varrho_K^2)+{\rm Tr}(\varrho_K^\perp \varsigma_K^\perp)$. Moreover the equations
${\rm Tr}(\rho_0^2)={\rm Tr}(\varrho_K^2)+{\rm Tr}[(\varrho_K^\perp)^2]$
and 
${\rm Tr}(\sigma^2)={\rm Tr}(\varrho_K^2)+{\rm Tr}[(\varsigma_K^\perp)^2]\leq1$ imply that
${\rm Tr}[(\varrho_K^\perp)^2]={\rm Tr}(\rho_0^2)-{\rm Tr}(\varrho_K^2)$
and
${\rm Tr}[(\varsigma_K^\perp)^2]\leq1-{\rm Tr}(\varrho_K^2)$.
From the Cauchy-Schwarz inequality we have that
${\rm Tr}(\varrho_K^\perp \varsigma_K^\perp)\geq-
\sqrt{{\rm Tr}[(\varrho_K^\perp)^2]{\rm Tr}[(\varsigma_K^\perp)^2]}
\geq
-\sqrt{{\rm Tr}(\rho_0^2)-{\rm Tr}(\varrho_K^2)}\sqrt{1-{\rm Tr}(\varrho_K^2)}$.
We have therefore proved that
\begin{equation}
   {\rm Tr}(\rho_0\sigma)\geq
{\rm Tr}(\varrho_K^2)-\sqrt{{\rm Tr}(\rho_0^2)-{\rm Tr}(\varrho_K^2)}\sqrt{1-{\rm Tr}(\varrho_K^2)} 
\label{Eq:LowBoundTr}
\end{equation}
and the main proposition follows.     
\end{proof} 

\begin{remark}
We would like to point out that if the target state $\rho_0$ is pure (i.e., ${\rm Tr}(\rho_0^2)=1$) the upper bound given in \eqref{better_constraint} simplifies to
\begin{equation}
    \label{better_constraint_pure}
      d_{HS}(\rho_0,\sigma) \leq 2\sqrt{1-{\rm Tr}(\varrho_K^2)}.
\end{equation}
Moreover, a similar bound also holds when the Bures metric is employed and the target state $\rho_0$ is pure. Indeed, when $\rho_0$ is pure, the Bures distance is written as $d_B(\rho_0,\sigma)=\sqrt{2(1-\sqrt{\tr(\rho_0\sigma)}}$. Therefore, by following similar step it possible to demonstrate that, for pure $\rho_0$ for which ${\rm Tr}(\varrho_K^2)\geq 1/2$ we have
\begin{equation}
   d_B^2(\rho_0,\sigma)\leq 
2(1-\sqrt{2{\rm Tr}(\varrho_K^2)-1}). 
\label{better_constraint_Bures}
\end{equation}
\end{remark}

Lastly, the bound~\eqref{better_constraint} can be interpreted geometrically.
The states $\sigma$ are written as $\sigma=\varrho_K+\varsigma_K^\perp$ with fixed $\varrho_K$.
Therefore the states $\sigma$ are contained within a ball centered in $\varrho_K$ and radius $R_K=\max \sqrt{{\rm Tr}[(\varsigma_K^\perp)^2]}=
\sqrt{1-{\rm Tr}(\varrho_K^2)}$. The state $\rho_0=\varrho_K+\varrho_K^\perp$ also belongs to such ball, but its distance from the center is given by $d_K=d_{HS}(\rho_0,\varrho_K)=\sqrt{{\rm Tr}[(\varrho_K^\perp)^2]}=\sqrt{{\rm Tr}(\rho_0^2)-{\rm Tr}(\varrho_K^2)}$.
Therefore, the maximum distance between $\rho_0$ and $\sigma$ is indeed bounded by  $R_K+d_K$, as in \eqref{better_constraint}.
Notice that, by starting from a set of linear independent observables $\{A_i\}$, adding an extra observable $A_j$ will improve the bound.

\begin{proposition}
  Assume we have fixed the first $\{A_i\}_{i=1}^{K}$ and we add a further measurement operator $A_{K+1}$. Let $\{\Gamma_i\}$ be  an orthonormal basis of the space spanned by the $\{A_i\}_{i=1}^{K}$. Define $A^\perp_{K+1}=A_{K+1}-\sum_i{\rm Tr}(\Gamma_iA_{K+1})\Gamma_i.$
Then the projected state becomes: 
\begin{align}
\varrho_{K+1}=
\varrho_K+\frac{
{\rm Tr}(\rho_0A_{K+1}^\perp)}
{
{\rm Tr}[(A^\perp_{K+1})^2]}
A^\perp_{K+1}
\end{align}
The latter also implies $\|\varrho_{K+1}\|^2_{HS}=\|\varrho_{K}\|^2_{HS}+\frac{{\rm Tr}^2(\rho_0 A^\perp_{K+1})}{\|A^\perp_{K+1}\|^2_{HS}}$. 
\end{proposition}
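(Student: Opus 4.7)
The plan is to use the fact that, as made explicit in the proof of Proposition~\ref{Prop:offlineApp}, $\varrho_K$ is the orthogonal projection of $\rho_0$ onto $V_K:=\mathrm{span}\{A_1,\dots,A_K\}$ with respect to the Hilbert--Schmidt inner product. In coordinates this means $\varrho_K=\sum_i \mathrm{Tr}(\Gamma_i\rho_0)\Gamma_i$, where the sum runs over the orthonormal basis $\{\Gamma_i\}$ of $V_K$ provided in the statement. The idea is then to extend this basis to an orthonormal basis of $V_{K+1}:=\mathrm{span}\{A_1,\dots,A_{K+1}\}$ by performing a single Gram--Schmidt step on $A_{K+1}$ and to recompute the projection in the enlarged basis.

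Concretely, I would first observe that $A_{K+1}^\perp=A_{K+1}-\sum_i\mathrm{Tr}(\Gamma_iA_{K+1})\Gamma_i$ is, by construction, Hilbert--Schmidt orthogonal to every $\Gamma_j\in V_K$, and is nonzero because Algorithm~IOS only adds $A_{K+1}$ when it is linearly independent of the previously selected operators. Since $A_{K+1}^\perp$ is Hermitian, $\|A_{K+1}^\perp\|_{HS}^2=\mathrm{Tr}[(A_{K+1}^\perp)^2]$, and
\[
\Gamma_{K+1}:=\frac{A_{K+1}^\perp}{\sqrt{\mathrm{Tr}[(A_{K+1}^\perp)^2]}}
\]
completes $\{\Gamma_i\}$ to an orthonormal basis of $V_{K+1}$. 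Applying the projection formula in the enlarged basis gives
\[
\varrho_{K+1}=\varrho_K+\mathrm{Tr}(\rho_0\Gamma_{K+1})\,\Gamma_{K+1}=\varrho_K+\frac{\mathrm{Tr}(\rho_0 A_{K+1}^\perp)}{\mathrm{Tr}[(A_{K+1}^\perp)^2]}\,A_{K+1}^\perp,
\]
which is the first claim.

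For the norm identity I would invoke Pythagoras on the orthogonal decomposition above: $\varrho_K\in V_K$ and $\mathrm{Tr}(\rho_0\Gamma_{K+1})\Gamma_{K+1}\in V_K^\perp$, so
\[
\|\varrho_{K+1}\|_{HS}^2=\|\varrho_K\|_{HS}^2+\mathrm{Tr}^2(\rho_0\Gamma_{K+1}),
\]
and rewriting the last term via the explicit form of $\Gamma_{K+1}$ produces the stated expression $\|\varrho_K\|_{HS}^2+\mathrm{Tr}^2(\rho_0 A_{K+1}^\perp)/\|A_{K+1}^\perp\|_{HS}^2$.

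There is no genuine obstacle here: the argument is essentially a one-step Gram--Schmidt bookkeeping exercise once the Hilbert--Schmidt projection interpretation of $\varrho_K$ is made explicit. The only subtle point worth verifying is the nondegeneracy $A_{K+1}^\perp\neq 0$, which is guaranteed by the linear independence requirement built into Algorithm~IOS and is also implicit in the idea that enlarging the measured subspace should tighten the bound~\eqref{better_constraint}.
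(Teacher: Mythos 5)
Your proof is correct. Both you and the paper are in essence performing a single Gram--Schmidt step, but the routes differ in bookkeeping: the paper writes $\rho_0=\varrho_K+\alpha A^\perp_{K+1}+\tau^\perp_{K+1}$, extracts $\alpha$ from the linear constraint ${\rm Tr}(\rho_0 A_{K+1})={\rm Tr}(\varrho_K A_{K+1})+\alpha\,{\rm Tr}(A_{K+1}A^\perp_{K+1})$, and then needs an explicit computation (expanding $A_{K+1}=A^\perp_{K+1}+\Omega_{K+1}$ with $\Omega_{K+1}=\sum_n{\rm Tr}(A_{K+1}\Gamma_n)\Gamma_n$) to show that ${\rm Tr}[(\rho_0-\varrho_K)A_{K+1}]={\rm Tr}(\rho_0 A^\perp_{K+1})$ and that the denominator equals $\|A^\perp_{K+1}\|^2_{HS}$. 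You instead normalize $A^\perp_{K+1}$ into $\Gamma_{K+1}$, note that $\{\Gamma_i\}\cup\{\Gamma_{K+1}\}$ is an orthonormal basis of the enlarged span, and read off the new projection coefficient ${\rm Tr}(\rho_0\Gamma_{K+1})$ directly from the orthonormal expansion; this bypasses the cross-term cancellation entirely. Your route also has the small merit of making the second claim explicit: the norm identity follows from Pythagoras applied to $\varrho_K\perp\Gamma_{K+1}$, a step the paper leaves as an unproved remark. The only hypothesis you rightly flag is $A^\perp_{K+1}\neq 0$, i.e.\ $A_{K+1}\notin{\rm span}\{A_i\}_{i=1}^K$, which is exactly the linear-independence condition enforced by the algorithms in the paper (and without which $\varrho_{K+1}=\varrho_K$ and the formula degenerates, as the stated fraction would be $0/0$).
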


\begin{proof}
We can write $\rho_0=\varrho_K+\alpha A^\perp_{K+1}+\tau^\perp_{K+1},$ with $A^\perp_{K+1}$ orthogonal to all $A_i$'s and
$\tau^\perp_{K+1}$ orthogonal to both $\varrho_K$ and $A^\perp_{K+1}$. 
Since ${\rm Tr}(\rho_0 A_{K+1})={\rm Tr}(\varrho_K A_{K+1})+\alpha{\rm Tr}(A_{K+1} A^\perp_{K+1})$ we can determine $\alpha={\rm Tr}[(\rho_0-\varrho_K)A_{K+1}]/{\rm Tr}(A_{K+1}A_{K+1}^\perp)$.
Therefore the projection of $\rho_0$ into the subspace spanned by the $\{A_i\}$ and $A_{K+1}$
is given by: 
\begin{align}
\varrho_{K+1}&=
\varrho_K+\frac{
{\rm Tr}[(\rho_0-\varrho_K)A_{K+1}]}
{
{\rm Tr}(A_{K+1}^2)-\sum_i
[{\rm Tr}(A_{K+1}\Gamma_i)]^2}
A^\perp_{K+1}.
\end{align}

More in detail, write $\varrho_K:=\sum_n\mathrm{Tr}(\rho_0\Gamma_n)\Gamma_n$ and $A_{K+1}=A^\perp_{K+1}+\Omega_{K+1}$ with $\Omega_{K+1}:=\sum_n\mathrm{Tr}(A_{K+1}\Gamma_n)\Gamma_n$, where $\mathrm{Tr}(\Gamma_n\Gamma_m)=\delta_{n,m}$. We have $\mathrm{Tr}(\varrho_K A^\perp_{K+1})=0$, $\mathrm{Tr}(\rho_0 \Omega_{K+1}) = \sum_n\mathrm{Tr}(A_{K+1}\Gamma_n)\mathrm{Tr}(\rho_0\Gamma_n)$ and 
\begin{equation*}
\begin{split}
  \mathrm{Tr}(\varrho_K \Omega_{K+1}) &= \sum_n\mathrm{Tr}(\rho_0\Gamma_n)\mathrm{Tr}(\Omega_{K+1}\Gamma_n)\\
  &= \sum_{n}\mathrm{Tr}(\rho_0\Gamma_n)\mathrm{Tr}\big(\sum_m \mathrm{Tr}(A_{K+1}\Gamma_m)\Gamma_m \Gamma_n\big) \\
  &= \sum_{n,m}\mathrm{Tr}(\rho_0\Gamma_n)\mathrm{Tr}(A_{K+1}\Gamma_m)\mathrm{Tr}(\Gamma_m \Gamma_n)\\
  &=\sum_n\mathrm{Tr}(\rho_0\Gamma_n)\mathrm{Tr}(A_{K+1} \Gamma_n) = \mathrm{Tr}(\rho_0 \Omega_{K+1}).
  \end{split}
\end{equation*}
From the latter we have that:
\begin{equation*}
\begin{split}
\mathrm{Tr}\big((\rho_0-\varrho_K)A_{K+1}\big) &= \mathrm{Tr}\big((\rho_0-\varrho_K)(A^\perp_{K+1}+\Omega_{K+1})\big) \\&= \mathrm{Tr}(\rho_0 A^\perp_{K+1})-\mathrm{Tr}(\varrho_K A^\perp_{K+1})+\mathrm{Tr}(\rho_0 \Omega_{K+1})-\mathrm{Tr}(\varrho_K \Omega_{K+1}) \\&= \mathrm{Tr}(\rho_0 A^\perp_{K+1}).
  \end{split}
\end{equation*}
Hence:
\begin{align}
\varrho_{K+1}&=
\varrho_K+\frac{
{\rm Tr}[(\rho_0-\varrho_K)A_{K+1}]}
{
{\rm Tr}(A_{K+1}^2)-\sum_i
[{\rm Tr}(A_{K+1}\Gamma_i)]^2}
A^\perp_{K+1}\nonumber=\varrho_K+\frac{{\rm Tr}(\rho_0 A^\perp_{K+1})A^\perp_{K+1}}{\|A^\perp_{K+1}\|^2_{HS}}.
\label{iteration}
\end{align}  
\end{proof}

Notice that the rhs of \eqref{better_constraint} represent an upper bound on the parameter $\alpha^k_i$ defined in Algorithm~IOS.
Since $\|\varrho_K\|_{HS}=\sqrt{{\rm Tr}(\varrho_K^2)}$, according to Proposition~\ref{Prop:offlineApp}, the norm $\|\varrho_K\|_{HS}$ of the projection $\varrho_K$ of $\rho_0$ over the subspace spanned by a subset of observables $\{ A_i\}$ is an useful parameter to optimize the sequence of the measurements. 
The larger is $\|\varrho_K\|_{HS}$, the lower the upper bound on $d_{HS}(\rho_0,\sigma)$.
Therefore, the measurement sequence should be chosen in order to maximize the norm of such projection at each step, since the upper bound~\eqref{better_constraint} is monotonically non-increasing with respect to the norm of the projection. 
To this aim, it is sufficient to select an observable $A_{K+1}$ which maximizes the value of $\frac{{\rm Tr}^2(\rho_0 A^\perp_{K+1})}{\|A^\perp_{K+1}\|^2_{HS}}$ at each step.

A more formal form of the above algorithm is summarized as Algorithm IAS.
\begin{table}[h!]
\begin{tabular}{l}
\hline\\
{\normalsize \textbf{Algorithm IAS}: Iterative Sequence based on $\rho_0$ and the Analytic bound} 
\\ 
\hline
\vbox{
\begin{itemize}
    \item \textbf{Initialization:} Define the sets $\mathsf{M}=\emptyset$ and $\mathsf{S}=\{1,\dots,R\}$. Set $k=1$.
    \item \textbf{Step 1:} For all $j\in\mathsf{S}$, compute 
$A_j^\perp=A_j-\sum_{i\in {\sf M}}{\rm Tr}(A_j\Gamma_i)\Gamma_i$, 
and 
$
\omega^{(k)}_{j}=\frac{{\rm Tr}^2(\rho_0 A^\perp_j)}{\|A^\perp_j\|_{HS}^2} 
$
for all $j\in{\mathsf{S}}$.
Then define the index
$
m_k \in \operatorname*{arg\,max}_{j\in{\mathsf{S}}}\omega^{(k)}_{j},
$
and the matrix
$
\Gamma_{m_k}=\frac{A^\perp_{m_k}}{\|A^\perp_{m_k}\|_{HS}}.
$
Update $\mathsf{M}=\mathsf{M}\cup \{m_k\}$, $\mathsf{S}=\mathsf{S}\setminus \{m_k\}$. Set $k=k+1$. 
\item \textbf{Step 2:} Repeat \textbf{Step 1} until {$\mathrm{card}(\mathsf{M})=d^2$}.
\end{itemize}
}\\
\hline
\end{tabular}
\end{table}

Note that if $\omega^{(k)}_{j}=0$ then $\rho_0\in\mathrm{span}\{\Gamma_{m_1},\dots,\Gamma_{m_{k-1}}\}$. If the $\arg\max$ in the algorithm above produces more than a single index, one is chosen at random in the set.
The sequence is generated by increasing as much as possible in each cycle the value of $\|\varrho_k\|_{HS}$. 
At the end of the procedure, $\mathsf{M}$ corresponds to an ordered sequence of {$d^2$} linearly independent measurement operators based on the upper bound on the distance from $\rho_0$ provided above.

\subsection{Verification algorithm based on the measurement sequence}
Once we obtained the measurement sequence $\mathsf{M}$ using one of the algorithms above, we can perform the Algorithm VM to verify whether the generated state $\rho_{\rm exp}$ is $(\epsilon,D,\rho_0)$-accurate according to \textbf{C1} and \textbf{C2}.

\begin{table}[h!]
\begin{tabular}{l}
\hline\\
{\normalsize \textbf{Algorithm VM}: Verification of the quantum state based on $\mathsf{M}$} 
\\ 
\hline
\vbox{
\begin{itemize}
    \item \textbf{Initialization:} Set $\mathsf{N}=\{m_1\}$ and $k=1$. 
    \item \textbf{Step 1:} Perform the measurements of  $A_{m_k}$ and collect the sampled average output $y_{m_k}$. Compute 
    $$
    \gamma_k:= \min_{\rho\in\bigcap_{n\in\mathsf{N}}\bar{\mathbf{S}}_n}D(\rho,\rho_0),~ \Gamma_k:= \max_{\rho\in\bigcap_{n\in\mathsf{N}}\bar{\mathbf{S}}_n}D(\rho,\rho_0).
    $$
    \begin{itemize}
        \item If $\gamma_k>\epsilon$, then $\rho_{\rm exp}$ is not $(\epsilon,D,\rho_0)$-accurate and \textbf{stop the procedure};
        \item If $\Gamma_k \leq \epsilon$, then $\rho_{\rm exp}$ is $(\epsilon,D,\rho_0)$-accurate and \textbf{stop the procedure};
        \item Otherwise, update $k=k+1$ and $\mathsf{N}=\mathsf{N}\cup\{m_k\}$.
    \end{itemize}
    \item \textbf{Step 2:} Repeat \textbf{Step 1} until $k=d^2$.
\end{itemize}
}\\
\hline
\end{tabular}
\end{table}

\begin{remark}
At the end of the above algorithm, if the procedure ends with $k=d^2$ we can reconstruct the generated state $\rho_{\rm exp}=\sum_{i\in\mathsf{N}}c_i A_i$ where $\{c_i\}_{i\in\mathsf{N}}$ can be computed for example by
\begin{equation}
\left[\begin{matrix}
c_1\\
\vdots\\
c_K
\end{matrix}\right]=
\left[\begin{matrix}
\tr(A_1 A_1) & \dots & \tr(A_1 A_K)\\
\vdots & \ddots & \vdots\\
\tr(A_K A_1) & \dots & \tr(A_K A_K)
\end{matrix}\right]^{-1}
\left[\begin{matrix}
y_1\\
\vdots\\
y_K
\end{matrix}\right].
\label{Eq:Reconstruction}  
\end{equation}   
\end{remark} 

\section{Adaptive quantum state verification} \label{sec:adaptive}
In the previously proposed algorithms, the measurement sequence was determined off-line (i.e., without performing any measurement) by only leveraging the information on the a-priori state $\rho_0$.
Here, we optimize the verification procedure Algorithm IOS and Algorithm IAS by  also exploiting the measurement data at each step in addition to the \textit{a priori} state to determine the next measurement and then verify the state. We call such protocol {\it adaptive verification.} 

For now, suppose that the {\it measurements are perfect}: namely the sampled output averages correspond to the true expected values for the actual state.
We initialize the algorithm as same as in Algorithm IOS, since before perform the measurements, the \textit{a priori} state is the only accessible information. Compute $\alpha^1_i:=\max_{\rho\in\mathbf{S}_i(\rho_0)} D(\rho,\rho_0)$ for all $i\in \{1,\dots,R\}$ and $m_1\in \arg\min_{i\in \{1,\dots,R\}}\alpha^1_i$. 
If $\arg\min$ cannot assign an unique $m_1$, then we consider the following rule: select an observable at random among those indicated by the criterion of Algorithm IAS, namely those maximizing $\mathrm{Tr}^2(\rho_0 A^\perp_{j})/\|A_j^\perp\|^2_{HS}$.
Then, we perform the measurement $A_{m_1}$ and obtain an empirical estimate of $y_{m_1}=\mathrm{Tr}(\rho_{\rm exp}A_{m_1})$. For the sake of simplicity in presenting the algorithm, we shall here assume we actually obtain the exact value $y_{m_1}.$ The case of imperfect estimates can be treated along the same lines. In order to test both criteria \textbf{C1} and \textbf{C2}, we compute 
$$
\omega_1:= \min_{\rho\in\bar{\mathbf{S}}_{m_1}}D(\rho,\rho_0), \qquad 
\Omega_1:= \max_{\rho\in\bar{\mathbf{S}}_{m_1}}D(\rho,\rho_0).
$$
If $\omega_1>\epsilon$, then $\rho_{\rm exp}$ is not $(\epsilon,D,\rho_0)$-accurate; and if $\Omega_1 \leq \epsilon$, then $\rho_{\rm exp}$ is $(\epsilon,D,\rho_0)$-accurate. Otherwise, we determine an estimate of $\rho_{\rm exp}$ based on the measurement data $y_{m_1}$ by $\rho_1=\arg\min_{\rho\in \bar{\mathbf{S}}_{m_1}} f_{\rho_0}(\rho)$, where 
$f_{\rho_0}(\rho)$ is a continuous function such that $\rho_0=\arg\min_{\rho\in \mathcal{S}(\mathcal{H}_d)} f_{\rho_0}(\rho)$, quantifying information distance between $\rho\in \mathcal{S}(\mathcal{H}_d)$ and $\rho_0\in \mathcal{S}(\mathcal{H}_d)$. 
Common choices for $f$ can be the quantum relative entropy~\cite{zorzi2013minimum}, or any distance function on $\mathcal{S}(\mathcal{H}_d)$~\cite[Chapter 9]{nielsen2002quantum}. Strictly convex functions guarantee the uniqueness of the minimum.  
For all $i\in \{1,\dots,R\}\setminus\{m_1\}$, according to the criteria \textbf{C1} and \textbf{C2}, we compute 
$$
\delta^1_i:=\min_{\rho\in \mathbf{S}_{i}(\rho_1)\cap \bar{\mathbf{S}}_{m_1} } D(\rho,\rho_0),~ \quad\Delta^1_i:=\max_{\rho\in\mathbf{S}_{i}(\rho_1)\cap \bar{\mathbf{S}}_{m_1}} D(\rho,\rho_0),
$$
where $\Delta^1_i\geq \delta^1_i\geq 0$ and $\mathbf{S}_i(\rho_1)=\{\rho\in\mathcal{S}(\mathcal{H}_d)|\,\mathrm{Tr}(\rho A_i)=\mathrm{Tr}(\rho_1 A_i)\}$. Notice that the constrained set now is computed  for $\rho_1,$ which depends on the actual measurement outcomes.
Intuitively, the smaller $\epsilon-\delta^1_i$ (resp. $\Delta^1_i-\epsilon$) is, the more likely \textbf{C1} (resp. \textbf{C2}) is verified (see Figure~\ref{Fig:QSV_adaptive}). 

If for some $i$ we have that $\Delta^1_i \geq \delta^1_i>\epsilon$ or $\epsilon>\Delta^1_i \geq \delta^1_i$, it means that choosing the corresponding measurement is expected to bring the compatible set closer to verify criteria \textbf{C1} or \textbf{C2}, respectively. However, if there exists $i$ such that $\delta^1_i = 0$, 
it implies that $\min\{\epsilon-\delta^1_i,\Delta^1_i-\epsilon\}=\epsilon$ and $\rho_0\in \mathbf{S}_{i}(\rho_1)\cap \bar{\mathbf{S}}_{m_1}$, which means that \textbf{C1} cannot yield the conclusion. 
Thus, if $\delta^1_i = 0$ for all $i$, only $\Delta^1_i$ provides the information for the selection of the next measurement.  
Therefore, in order to maximize the possibility of the successful verification, we set 
\begin{equation*}
  m_2\in
  \begin{cases}
  \arg\min_{i\in\mathsf{S}} \Delta^1_i, & \delta^1_i=0,\forall i\\
  \arg\min_{i\in\mathsf{S}} \big\{ \min\{\epsilon-\delta^1_i,\Delta^1_i-\epsilon\} \big\},& \text{else}.
  \end{cases}
\end{equation*}
If $\arg\min$ cannot assign an unique $m_2$, then we can select one by employing the idea of  Algorithm~IAS,  that is to select an observable at random among those which maximize $\mathrm{Tr}^2(\rho_1 {A}^{\perp}_{j})/\|A_j^\perp \|_{HS}^2$.
\begin{figure}[h]
    \centering
    \includegraphics[height=5cm]{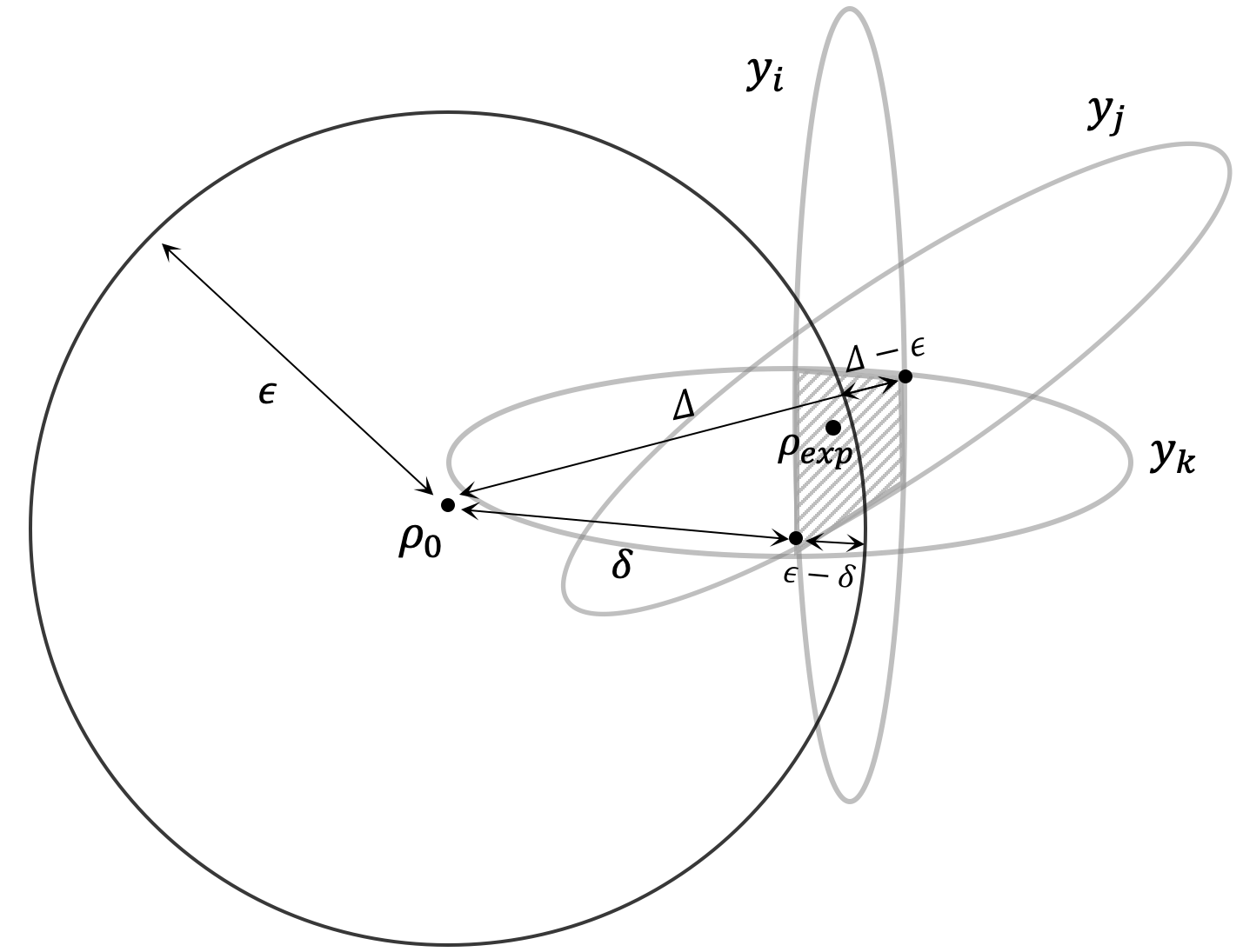} %
    \caption{Diagrams corresponding to the option of quantum state verification criteria \textbf{C1} and \textbf{C2}. The grey area represents $\bar{\mathbf{S}}_i\cap\bar{\mathbf{S}}_j\cap\bar{\mathbf{S}}_k$, i.e., the states compatible with the measurement data $y_i$, $y_j$ and $y_k$.}%
    \label{Fig:QSV_adaptive}
\end{figure}
Then, the whole procedure of verification can be defined recursively. 

\begin{remark}
Note that, at each step, determining an estimate $\rho_k$ of $\rho_{\rm exp}$ solves the quantum state tomography~\cite{paris2004quantum} based on the partial information, the obtained sequence $\{\rho_k\}^R_{k=1}$ converges to $\rho_{\rm exp}$, since $\rho_R=\bigcap^R_{i=1}\bar{\mathbf{S}}_{i}=\rho_{\rm exp}$ and the measurements are supposed to be perfect. 
\end{remark} 

We summarize the algorithm of adaptive verification with perfect measurements as Algorithm AV.
\begin{table}[h!]
\begin{tabular}{l}
\hline\\
{\normalsize \textbf{Algorithm AV}: Adaptive Verification} 
\\ 
\hline
\vbox{
\begin{itemize}
\item \textbf{Initialization:} Define the sets $\mathsf{M}=\emptyset$ and $\mathsf{S}=\{1,\dots,R\}$, and compute $$\alpha^1_i:=\max_{\rho\in\mathbf{S}_i(\rho_0)} D(\rho,\rho_0)$$ for all $i\in \mathsf{S}$ and $\arg\min_{i\in \mathsf{S}}\alpha^1_i$. 

If $\arg\min$ output a single integer, set $m_1=\operatorname*{arg\,min}_{i\in{\mathsf{S}}} \alpha^1_i$. If $\arg\min$ output multiple integers, choose at random a $m_1\in \operatorname*{arg\,min}_{i\in{\mathsf{S}}} \mathrm{Tr}^2(\rho_0 A^\perp_i)/\|A^\perp_i\|^2_{HS}$.
Set $\Gamma_{m_1}=A_{m_1}/\|A_{m_1}\|_{HS}$ and $k=1$.  
\item \textbf{Step 1:} Perform the measurements corresponding to $A_{m_k}$ and collect the sampled average output $y_{m_k}$. Update $\mathsf{M}=\mathsf{M}\cup \{m_k\}$ and $\mathsf{S}=\mathsf{S}\setminus \{m_k\}$. Compute 
\begin{equation*}
    \omega_k:= \min_{\rho\in\bigcap_{n\in\mathsf{M}}\bar{\mathbf{S}}_n}D(\rho,\rho_0),~\Omega_k:= \max_{\rho\in\bigcap_{n\in\mathsf{M}}\bar{\mathbf{S}}_n}D(\rho,\rho_0).
\end{equation*}
    \begin{itemize}
        \item If $\omega_k>\epsilon$, then $\rho_{\rm exp}$ is not $(\epsilon,D,\rho_0)$-accurate and \textbf{stop the procedure};
        \item If $\Omega_k \leq \epsilon$, then $\rho_{\rm exp}$ is $(\epsilon,D,\rho_0)$-accurate and \textbf{stop the procedure};
        \item Otherwise, set 
$
\rho_k\in\operatorname*{arg\,min}_{\rho\in \bigcap_{n\in\mathsf{M}}\bar{\mathbf{S}}_{n}}f_{\rho_0}(\rho).
$
    \end{itemize}
\item \textbf{Step 2:} 
Collect all $i\in\mathsf{S}$ such that $A_i\notin \mathrm{span}\{A_i\}_{i\in\mathsf{M}}$ in $\bar{\mathsf{S}}$.
For all $i\in \bar{\mathsf{S}}$, compute 
\begin{align*}
  \delta^k_i&:=\min_{\rho\in \mathbf{S}_{i}(\rho_{k}) \cap \bigcap_{n\in\mathsf{M}}\bar{\mathbf{S}}_{n} } D(\rho,\rho_0),\\ \Delta^k_i&:=\max_{\rho\in \mathbf{S}_{i}(\rho_{k}) \cap \bigcap_{n\in\mathsf{M}}\bar{\mathbf{S}}_{n}} D(\rho,\rho_0),  
\end{align*}
where $\mathbf{S}_i(\rho_k)=\{\rho\in\mathcal{S}(\mathcal{H}_d)|\,\mathrm{Tr}(\rho A_i)=\mathrm{Tr}(\rho_k A_i)\}$.

If $\delta^k_i=0$ for all $i\in \mathsf{S}$, compute $\operatorname*{arg\,min}_{i\in\bar{\mathsf{S}}} \Delta^{k}_i$.
\begin{itemize}
    \item If $\arg\min$ outputs a single integer, set $m_{k+1}=\operatorname*{arg\,min}_{i\in\bar{\mathsf{S}}} \Delta^{k}_i $;
    \item If $\arg\min$ outputs multiple integers, compute 
$A_j^\perp=A_j-\sum_{i\in {\sf M}}{\rm Tr}(A_j\Gamma_i)\Gamma_i$ for all $j\in\bar{\mathsf{S}}$ and 
choose at random $m_k\in\operatorname*{arg\,max}_{j\in \bar{\mathsf{S}}}\mathrm{Tr}^2(\rho_0 A^\perp_j)/\|A^\perp_j\|^2_{HS}.$ Set $\Gamma_{m_k}=A_{m_k}^\perp/\|A_{m_k}^\perp\|_{HS}.$
\end{itemize}
Otherwise, compute 
$$
\operatorname*{arg\,min}_{i\in\bar{\mathsf{S}}} \big\{ \min\{\epsilon-\delta^{k}_i,\Delta^{k}_i-\epsilon\} \big\}.
$$
\begin{itemize}
    \item If $\arg\min$ outputs a single integer, set $m_{k+1}=\operatorname*{arg\,min}_{i\in\bar{\mathsf{S}}} \big\{ \min\{\epsilon-\delta^{k}_i,\Delta^{k}_i-\epsilon\} \big\}$;
    \item If $\arg\min$ outputs multiple integers, compute 
$A_j^\perp=A_j-\sum_{i\in {\sf M}}{\rm Tr}(A_j\Gamma_i)\Gamma_i$ for all $j\in\bar{\mathsf{S}}$ and 
choose at random $m_k\in\operatorname*{arg\,max}_{j\in \bar{\mathsf{S}}}\mathrm{Tr}^2(\rho_0 A^\perp_j)/\|A^\perp_j\|^2_{HS}.$ Set $\Gamma_{m_k}=A_{m_k}^\perp/\|A_{m_k}^\perp\|_{HS}.$
\end{itemize}
Update $k=k+1$.
\item \textbf{Step 3:} Repeat \textbf{Step 1} and \textbf{Step 2} until $\mathrm{card}(\mathsf{M})=d^2$.
\end{itemize}
}\\
\hline
\end{tabular}
\end{table}
Due to the perfect measurements,  we can always obtain the verification results when the above algorithm ends.
In Step 2, we specifically consider the case $\delta^k_i=0$ for all $i\in \mathsf{S}$, in which $\rho_0$ belongs to the compatible sets, i.e., \textbf{C1} is always verified. Thus, we can only apply \textbf{C2} to determine the next measurement. 
If $\rho_{\rm exp}=\rho_0$, in Step 1 of Algorithm 4, we have $\rho_k \equiv \rho_0$ for any $k\in\{1,\dots,R\}$ since $\rho_0=\arg\min_{\rho\in \mathcal{S}(\mathcal{H}_d)} f_{\rho_0}(\rho)$, which implies $\delta^k_i\equiv 0$. 
Thus, in this case, Algorithm~4 is equivalent to the combination of Algorithm~IOS and Algorithm~IAS. 

Note that Algorithm~AV can also be applied to the imperfect measurement case. However, if the sample size is not big enough or there are errors and bias, one may obtain $\bigcap^K_{i=1}\hat{\mathbf{S}}_{m_i}=\emptyset.$ In this case, we need to stop the verification process and re-measure $\rho_{\rm exp}$.

\section{Application: Two-qubit systems}\label{sec:simulations}
In the following, we test the proposed algorithm simulating measurements to verify the accuracy of preparation of randomized pure states in a two qubit system. We summarize the key elements of the numerical experiments we ran.

\textbf{Target states}: According to the normal distribution, we pick 100 sets of 4 independent complex random numbers with real and imaginary parts belonging to $[-100,100]$, i.e., $|\psi_i\rangle\in\mathbb{C}^4$ with $i=1,\dots,100$. Then, we generate 100 pure target states by $\rho_{0,i}=\frac{|\psi_i\rangle\langle\psi_i|}{\mathrm{Tr}(|\psi_i\rangle\langle\psi_i|)}$.\\
\textbf{Bures distance}: The distance we employ is the Bures diatance, which reduces to $d_B(\rho,\rho_0)=\sqrt{2(1-\sqrt{F(\rho,\rho_0)})}=\sqrt{2(1-\sqrt{\mathrm{Tr}(\rho \rho_0)})}$ for the case of $\rho_0$ being a pure state. Obviously, $d_B(\rho,\rho_0)$ is strictly monotonically decreasing with respect to $\mathrm{Tr}(\rho \rho_0)$. Due to the linearity, we can apply the convex optimization (CVX-SDP~\cite{cvx}) in the simulation for searching the minimum and maximum value of $\mathrm{Tr}(\rho \rho_0)$ under constraints.\\
\textbf{Accuracy}: $\epsilon = \sqrt{2(1-\sqrt{\tilde{\epsilon}})}$, where $\tilde{\epsilon}$ is the desired  precision for the fidelity $\mathrm{Tr}(\rho \rho_0)$. We consider 
$\tilde{\epsilon}=0.95$ so that  $\epsilon= 0.2250$.\\
\textbf{Measurements}: We apply projective measurements into Pauli eigenstates. Let $\Pi_1\dots\Pi_6$ be the eigenprojectors of Pauli matrices corresponding to the eigenvalue $1$ and $-1$ respectively, i.e., $\sigma_x \Pi_1 = \Pi_1, \sigma_x \Pi_2 = -\Pi_2, \dots, \sigma_z \Pi_6 = -\Pi_6$. We denote by $A_{6(i-1)+j}=\Pi_i\otimes\Pi_j$ with $i,j\in\{1,\dots,6\}$ the 36 observables for the two-qubit system. The set of observables $\{A_i\}^{36}_{i=1}$ is information-complete.\\
\textbf{Generated state}: 
We generate 100 full rank $(\epsilon,d_B,\rho_{0,k})$-accurate states $\rho^{a}_{{\rm exp},k}$ and 100 full rank $(\epsilon,d_B,\rho_{0,k})$-non-accurate states $\rho^{n}_{{\rm exp},k}$ by perturbing the target state $\rho_{0,k}$ with $k\in{1,\dots,100}$ as 
\begin{equation}
    \rho_{{\rm exp},k} = e^{i \eta H_k}\left((1-\lambda)\rho_{0,k}+ \frac{\lambda}{4} \mathds{1}_4\right) e^{-i \eta H_k},
\end{equation}
where $\lambda\in(0,1)$, $\eta>0$ and $H_k$ are random Hermitian matrix. We generate the random $H_k\in\mathcal{B}_*(\mathbb{C}^4)$ in the following way, express $H_k = \sum^{15}_{j=0}h_{j,k}\Gamma_n$ where $\Gamma_0 = \mathds{1}_4$ and $\{\Gamma_j\}^{15}_{j=1}$ are generators of the Lie algebra $\mathrm{SU}(4)$ satisfying $\mathrm{Tr}(\Gamma_j)=0$ and $\mathrm{Tr}(\Gamma_m \Gamma_j)=2\delta_{jm}$ with $j,m\in\{1,\dots,15\}$, $\{h_{j,k}\}^{16}_{n=0}$ are random scalars drawn from the uniform distribution in the intervals $(-1,1)$.
We set $\eta = 0.1$, $\lambda = 0.0001$ for the accurate case and $\lambda = 0.1$ for the non-accurate case.

\subsection{Before measurements: Algorithm IOS vs Algorithm IAS}

\textbf{Algorithm IOS}: We use CVX-SDP mode to apply semidefinite programming, and obtain $100$ measurement sequences, $\mathsf{M}_k=[m_{k,j}]_{ j\leq 16}$ for $k\in\{1,\dots,100\}$.\\
\textbf{Algorithm IAS}: We obtain $100$ measurement sequences, $\mathsf{R}_k=[r_{k,j}]_{j\leq 16}$ for $k\in\{1,\dots,100\}$.\\
\textbf{Comparison}: Based on the measurement sequences $\mathsf{R}_k$ generated by Algorithm IOS, we apply semidefinite programming (CVX-SDP mode) to compute the following 
$$
\beta_{k,l} = \max_{\rho\in\bigcap_{j\in [\mathsf{R}_k]_l }\mathbf{S}_{j}(\rho_{0,k})} d_B(\rho,\rho_{0,k}),
$$
where $[\mathsf{R}_k]_l$ denotes the first $l$ elements of $\mathsf{R}_k$. 
The value $\beta_{k}$ can be considered as an indicator of how well Algorithm~IAS approximates Algorithm~IOS. The upper diagram of Figure~\ref{Fig:Ideal_Algo1vsAlgo2} draws error bars of $\beta_{k}-\alpha_{k}$ which represents the mean value and standard deviation, where $\alpha_{k}$ are defined in Algorithm~IOS; the lower diagram draws the number of measurements required by Algorithm IAS minus the one required by Algorithm IOS for reconstructing $\rho_{0,k}$. Taking the machine precision into account, reconstruction of $\rho_{0,k}$ means $d_B(\rho,\rho_{0,k})\leq 10^{-6}$ for $\rho$ belonging to the compatible set. For 100 target states $\rho_{0,k}$, the mean values and the standard deviations of the number of measurements required by Algorithm IOS and Algorithm IAS for the reconstruction are $(5.69,0.5449)$ and $(6.47,0.6884)$ respectively.
 It is worth noting that, more measurements are required by Algorithm IOS than Algorithm IAS in few cases, since Algorithm~IOS does not always provide the optimal measurement sequence, being itself an approximation of Algorithm OS.
\begin{figure}[!t]
    \centering
    \includegraphics[height=8cm]{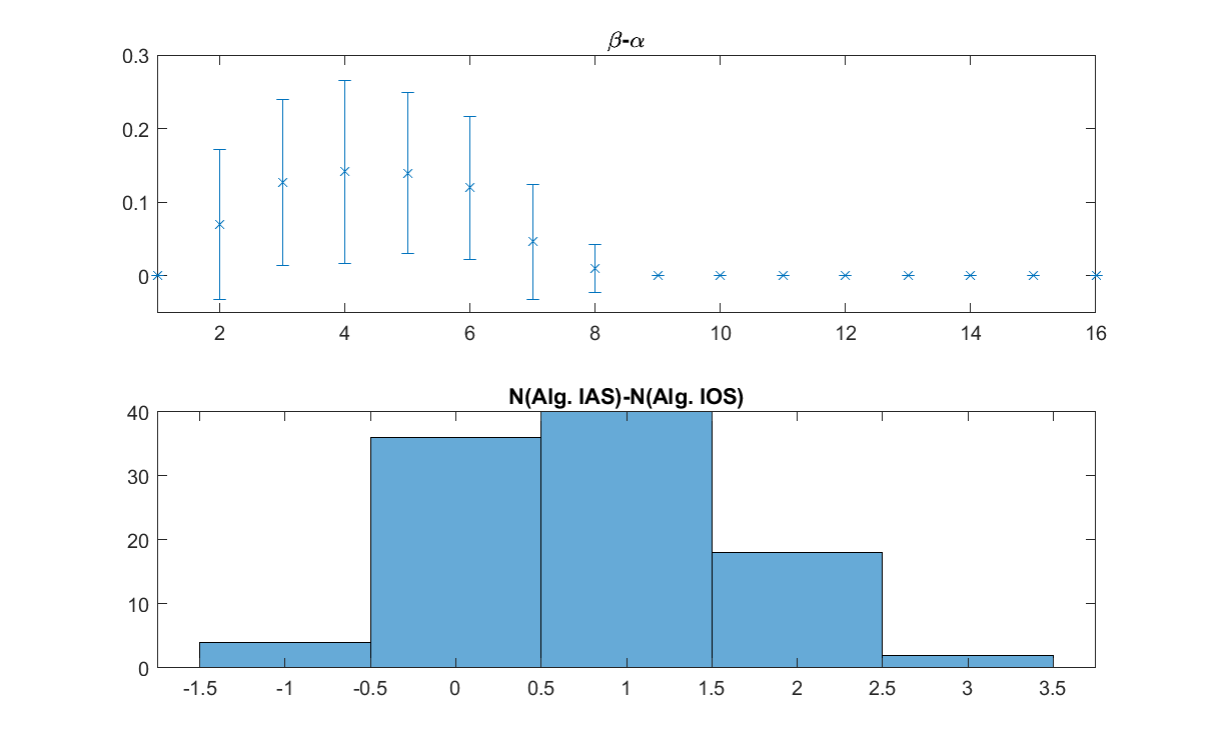}
    \caption{Comparison of Algorithm~IOS and Algorithm~IAS on the reconstruction of $\rho_{0,k}$ with $k\in\{1,\dots,100\}$.}%
    \label{Fig:Ideal_Algo1vsAlgo2}
\end{figure}

\subsection{Accurate $\rho_{\rm exp}$: Algorithm IOS vs Algorithm IAS vs Algorithm AV vs Control groups}

\textbf{Control groups}: Since the set of measurements considered here is information-overcomplete, we generate 5 random measurement sequences for each accurate generated state $\rho^{a}_{{\rm exp},k}$, every sequence contains 16 linearly independent observables. \\ 
\textbf{Numerical Test}: 
We apply the verification protocol (Algorithm VM) on the measurement sequences generated off-line by Algorithm IOS, Algorithm IAS and randomized control groups, and run the adaptive protocol (Algorithm AV) with $f_{\rho_0}(\rho)=d_B(\rho,\rho_0)$.

\begin{remark}
In the case of multiple measurements with the same index of merit, Algorithm IAS selects one measurement at random, while Algorithm IOS and AV use the following rule, inspried by Algorithm IAS: select an observable at random among those which maximize $\mathrm{Tr}^2(\rho_1 {A}^{\perp}_{j})/\|A_j^\perp \|_{HS}^2$. The further optimization step is based on analytic formulas so it is not computationally intensive. The same rule will be used in the next set of simulations as well. 
\end{remark}

The main results are summarized in Figure~\ref{Fig:eps1Acc} and Table~\ref{Tab:NumVer1_acc}. The first diagram of Figure~\ref{Fig:eps1Acc} shows the histogram of the number of measurements required for the verification of accuracy by Algorithm~IOS, Algorithm IAS, Algorithm AV and control groups. This diagram and Table~\ref{Tab:NumVer1_acc} confirm the efficiency of our algorithms in verification of accuracy. The rest diagrams show the histogram of difference of the number of measurements required by different algorithms. 
In this situation, Algorithm~IOS exhibits an advantage with respect 
to Algorithm~IAS. In this case, the performance of Algorithm AV is almost equal to Algorithm~IOS. This results are not surprising: when the state to be verified is indeed close to the target one, Algorithm IOS is expected to provide the best iteratively-built sequence. Nonetheless, Algorithm IAS performance is fairly close (one extra measurement operator is needed on average), and has the advantage of avoiding iterated optimization procedures as it relies only on analytic formulas. 

\begin{remark}
It is worth noticing that  the performance of Algorithm AV strongly depends on the choice of the function $f_{\rho_0}$. Here, we only consider the basic choice $f_{\rho_0}(\rho)=d_B(\rho,\rho_0)$: the optimization of $f_{\rho_0}$ will be the focus of the future work.  
\end{remark}

\begin{figure}[!t]
    \centering
    \includegraphics[height=8cm]{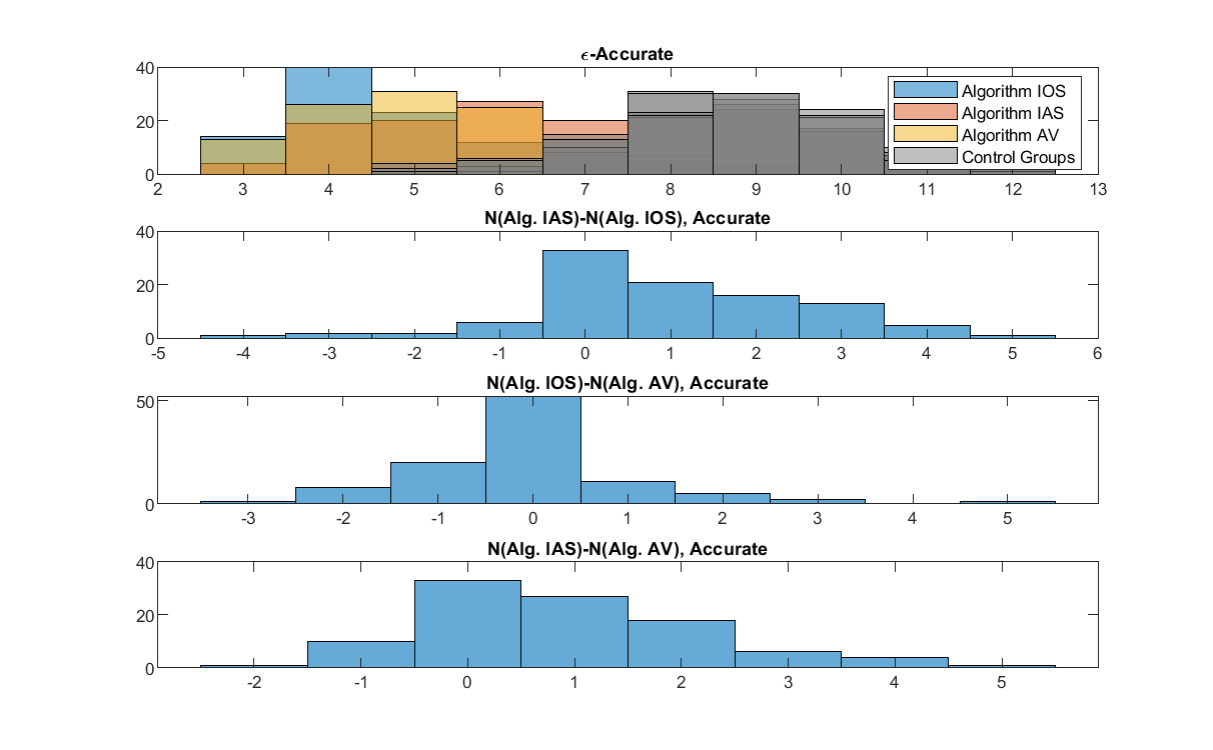}
    \caption{The first histogram displays the distribution of the number of measurements required to verify the accuracy of $\rho^{a}_{\rm{exp},k}$ for $k=1,\dots,100$. The other three show the distribution of the difference between the lengths of the sequences of two algorithms for the same set of generated measurements: For example, if the displayed $N(\textrm{(Alg. X)}-N(\textrm{(Alg. Y)})$ is negative it indicates an advantage for $\textrm{(Alg. X)}.$}
    \label{Fig:eps1Acc}
\end{figure}

\subsection{Non-accurate $\rho_{\rm exp}$: Algorithm IOS vs Algorithm IAS vs Algorithm AV vs Control groups}

\textbf{Control groups}: We generate 5 random measurement sequences for each non-accurate generated state $\rho^{n}_{{\rm exp},k}$, every sequence contains 16 linearly independent observables. \\ 
\textbf{Numerical Test}: We apply the verification protocol (Algorithm  VM) on the measurement sequences generated off-line by Algorithm IOS, Algorithm IAS and randomized control groups, and also run the adaptive protocol (Algorithm AV) with $f_{\rho_0}(\rho)=d_B(\rho,\rho_0)$. 

The main results are summarized in Figure~\ref{Fig:eps1Nacc} and Table~\ref{Tab:NumVer1_nacc}. The first diagram of Figure~\ref{Fig:eps1Nacc} shows the histogram of the number of measurements required for the verification of non-accuracy by Algorithm IOS, Algorithm IAS, Algorithm AV and control groups. This diagram and Table~\ref{Tab:NumVer1_nacc} confirm the efficiency of our algorithms in verification of non-accuracy with respect to random sequences. The rest diagrams show the histogram of difference of the number of measurements required by different algorithms. We can observe that the performance are similar, with a slight advantage for the adaptive protocol, Algorithm AV. Other numerical experiments indicate that the difference in performance becomes more relevant if the needed number of measurements grows.
\begin{sidewaystable}
\caption{Verification of accurate state}\label{Tab:NumVer1_acc}
\begin{tabular}{c|c|c|c|c|c|c|c}
\hline
Alg. IOS & Alg. IAS & Alg. AV & Control gr. 1 & Control gr. 2 & Control gr. 3 & Control gr. 4 & Control gr. 5 \\  \hline
(4.76,1.46) & (5.73,1.42) & (4.83,1.10) & (8.68,1.38) & (8.74,1.52) & (8.82, 1.38) & (8.75,1.38) & (8.71,1.37) \\  \hline
\end{tabular}
\footnotetext{The mean value and the standard deviation $
(m,\sigma)$ of the number of measurements required for verifying the accuracy of $\rho^{a}_{\rm{exp},k}$ for $k=1,\dots,100$.}

\bigskip

\caption{Verification of non-accuration}\label{Tab:NumVer1_nacc}
\begin{tabular}{c|c|c|c|c|c|c|c}
\hline
Alg. IOS & Alg. IAS & Alg. AV & Contr. gr. 1 & Contr. gr. 2 & Contr. gr. 3 & Contr. gr. 4 & Contr. gr. 5 \\  \hline
(5.14,1.65) & (5.28,1.25) & (5.06,1.11) & (8.38,1.80) & (8.34,1.72) & (8.71,1.73) & (8.48, 1.8504) & (8.57,1.83) \\  \hline
\end{tabular}
\footnotetext{The mean value and the standard deviation of the number of measurements required for verifying the non-accuracy of $\rho^{n}_{\rm{exp},k}$ for $k=1,\dots,100$.}
\end{sidewaystable}
\begin{figure}[!t]
    \centering
    \includegraphics[height=8cm]{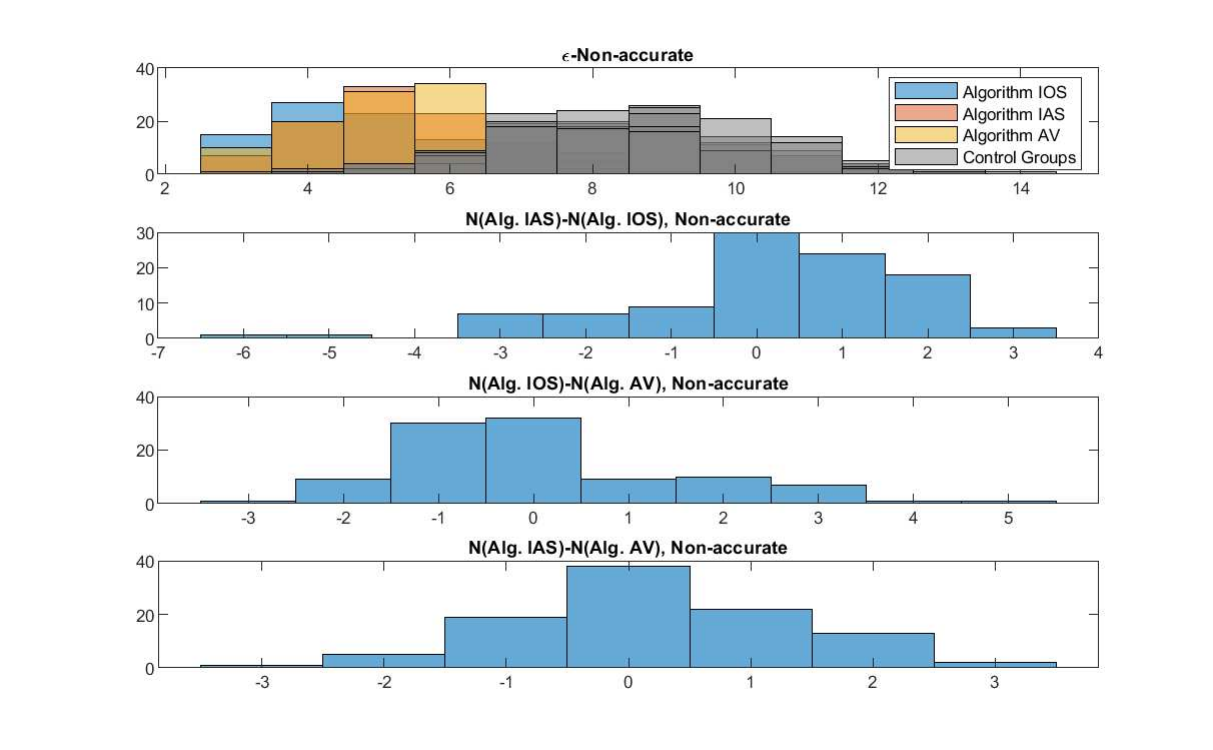} %
    \caption{The first histogram displays the distribution of the number of measurements required to verify the non-accuracy of $\rho^{n,1}_{\rm{exp},k}$ for $k=1,\dots,100$. The other three show the distribution of the difference between the lengths of the sequences of two algorithms for the same set of generated measurements.}
    \label{Fig:eps1Nacc}
\end{figure}

\section{Conclusions}
In this work we define and study {\it quantum state verification}, a key task to test the effectiveness of quantum state preparation procedures, quantum communication channels, quantum memories, and a variety of quantum control algorithms.

Assuming that i.i.d. copies of the system can be produced, the resulting state can be identified by tomographic techniques: sampled averages of a basis of observables are sufficient to determine an estimate of the state and thus to decide if it is compatible with given accuracy requirements. We propose improved strategies to select the observables to be measured, so that a decision on the accuracy of the preparation can be reached well before the full set of measurement is completed. While The protocols rely on prior information about the target state, and either provide a full ordered list of observables to be performed, or adaptively decide the next observable based on the previously obtained ones. While in our approach scales as a linear function of $1/\epsilon^2$ (by applying Proposition 10 of \cite{certificationtutorial} to each measurement of the sequence in order to obtain an appropriate accuracy), all strategies obtaining $1/\epsilon$ scaling rely on the ability of tuning the measurements for the specific target. Here, on the other hand, we are limited to a fixed, finite set of general measurements, a situation motivated by typical  experimentally-available setups.

Numerical tests indicate that, for example, a fidelity of 0.95 can be tested on a qubit system
with just $5$ measurement of joint Pauli operators, when using randomized sequences requires at least $8$. While the solution of the problem leads to solve and compare multiple optimization problems, we also propose an iterative, suboptimal algorithm whose solution can be computed analytically, based on a geometric approximation of the set of states compatible with given measurement outcomes. The adaptive strategy holds an advantage, especially when the needed number of measurements grows, albeit it requires a more involved implementation.
Further work will address the use of optimized measurement sequences for fast tomography, the use of different distance functions for the adaptive strategies, and the application to real data from experimental systems of interest.

\bmhead{Acknowledgments}
F.T. and G.V. acknowledge partial funding from the European Union - NextGenerationEU, within the National Center for HPC, Big Data and Quantum Computing (Project No. CN00000013, CN 1, Spoke 10) and from the European Union’s Horizon Europe research and innovation programme under the project ``Quantum Secure Networks Partnership" (QSNP, grant agreement No. 101114043). Views and opinions expressed are however those of the author(s) only and do not necessarily reflect those of the European Union or European Commission-EU. Neither the European Union nor the granting authority can be held responsible for them.

\section*{Declarations}
The authors have no relevant financial or non-financial interests to disclose. The datasets generated during and/or analyzed during the current study are available from the corresponding author on reasonable request.



\bibliography{Ref_Verification}

\end{document}